\newcommand{\bb}{\underline{b}}
\newcommand{\x}{\underline{x}}
\newcommand{\llr}{\underline{l}}
\newcommand{\y}{\underline{y}}
\definecolor{TODO}{rgb}{0.6,0.6,0.6} % TO DO!!!!
\definecolor{TOCHECK}{rgb}{0.8,0.8,0.8} % TO CKECK!!!!
\newtheorem{theorem}{Theorem}
\newcommand{\btheo}{\begin{theorem}}
\newcommand{\etheo}{\end{theorem}}
\newcommand{\bproof}{\begin{proof}}
\newcommand{\eproof}{\end{proof}}
\newtheorem{definition}[theorem]{Definition}
\newcommand{\bdefi}{\begin{definition}}
\newcommand{\edefi}{\end{definition}}
\newtheorem{fact}[theorem]{Fact}
\newcommand{\bprop}{\begin{fact}}
\newcommand{\eprop}{\end{fact}}
\newtheorem{corollary}[theorem]{Corollary}
\newcommand{\bcor}{\begin{corollary}}
\newcommand{\ecor}{\end{corollary}}
\newtheorem{example}[theorem]{Example}
\newcommand{\bex}{\begin{example}}
\newcommand{\eex}{\end{example}}
\newtheorem{lemma}[theorem]{Lemma}
\newcommand{\blemma}{\begin{lemma}}
\newcommand{\elemma}{\end{lemma}}
\newtheorem{remark}[theorem]{Remark}
\newcommand{\bremark}{\begin{remark}}
\newcommand{\eremark}{\end{remark}}
\newtheorem{conj}[theorem]{Conjecture}
\newcommand{\bconj}{\begin{conj}}
\newcommand{\econj}{\end{conj}}
\def\0{{\tt 0}} % Ex.: BPSK modulation => 0 is encoded into +1
\def\1{{\tt 1}} % Ex.: BPSK modulation => 1 is encoded into -1
\def\?{{\tt *}} % erasure symbol
\newcommand{\qed}{{\hfill \footnotesize $\blacksquare$}}
\renewcommand{\mid}{\,|\,}
\begin{document} 
%\title{Enhanced Linear Programming Decoding via Frustrated Cycles} 
%\title{Finding Frustrated Cycles to Improve Linear Programming Decoders} 
\title{Improved Linear Programming Decoding using Frustrated Cycles} 
\author{\authorblockN{Shrinivas Kudekar\authorrefmark{1},  Jason K. Johnson\authorrefmark{1} and Misha Chertkov\authorrefmark{1}\authorrefmark{2}\\ } \authorblockA{\authorrefmark{1}
Center for Nonlinear Studies \\ \& Theoretical Division T-4 \\
Los Alamos National Laboratory, Los Alamos NM, USA. \\
Email: \{skudekar, jasonj, chertkov\}@lanl.gov}\\
\authorblockA{\authorrefmark{2} New Mexico Consortium, Los Alamos, NM, USA.}
}

\maketitle \begin{abstract} We consider transmission over a binary-input
additive white Gaussian noise channel using low-density parity-check codes. One
of the most popular techniques for decoding low-density parity-check codes is
the linear programming decoder. In general, the linear programming decoder is
suboptimal.  I.e., the word error rate is higher than the optimal, maximum a
posteriori decoder.

In this paper we present a systematic approach to enhance the linear program
decoder. More precisely, in the cases where the linear program outputs a
fractional solution, we give a simple algorithm to  identify {\em frustrated
cycles} which cause the output of the linear program to be fractional. Then
adding these cycles, {\em adaptively} to the basic linear program, we show {\em
improved word error rate performance}. 

\end{abstract}

\section{Introduction} We consider transmission over a binary-input additive white Gaussian  
 noise channel (BIAWGNC) using low-density parity-check (LDPC) codes. The
two most fundamental decoders in this context are the belief propagation (BP)
decoder \cite{RiU08} and the linear programming (LP) decoder \cite{FWK05}.  In
this paper we are interested in the performance of the LP decoder.  There is an
extensive literature on analysis and design of the LP decoder for LDPC codes
 \cite{FWK05, Fel03, KoV03b, DDKW08}. As is well known, LP decoders have the advantage
 that they provide the {\em ML certificate}. This means that, if the LP decoder
outputs an integer solution, then it must be the maximum likelihood (ML)
codeword. Thus in this case the LP behaves as an optimal decoder. One can also
say that in this case there is no {\em duality gap}. 

However, it is also known that in general the LP decoder is suboptimal
\cite{FWK05}. I.e., there exists channel noise realizations such that the LP
decoder outputs a fractional solution, known as {\em pseudocodewords}
\cite{KoV03b}, but still there exists a unique codeword which
{\em minimizes} the objective function. This implies that the LP decoder is not
successful in finding the ML codeword. As a result, there is a gap between the
performance of the LP decoder and the ML decoder. Hence it is an interesting  
 question to understand what causes the LP decoder to fail and further
if there exists methods to {\em improve} the LP decoder. It is well known that
adding redundant parity-check nodes to the Tanner graph of the LDPC code
improves the LP decoder \cite{FWK05, Burshtein10}.  However it is not desirable to add all such
constraints as it will slow down the LP decoder considerably.  

In this work we propose an approach to adaptively add constraints to the LP
decoder which, simultaneously, reduce the duality gap and are tractable (i.e.,
the number of such additional constraints are small and also each constraint
involves only a small number of variables).  Such approaches, which try to get
rid off the fractional solution (or make the LP polytope tighter), have been
used to improve the LP decoding of LDPC codes \cite{FWK05, TagSie08, DimWain09,
Burshtein10, DYW07, Chertkov07, YFW06}. The new LP decoder which we propose,
identifies {\em frustrated cycles} (see Section~\ref{sec:LPwithFC}) when the
basic LP produces a fractional solution.  We show that these frustrated cycles
are the cause of inconsistency in the solution. Then we adaptively add them, as
constraints, to the basic LP decoder.  This enables us to recover the
 transmitted codeword in many cases. We show empirically that the new LP
decoder has an improved word error rate performance. Furthermore, the new LP
decoder also has tractable complexity.

\section{Channel Model, Maximum Likelihood Decoder and Linear Programming decoder} 

\subsection{Setup and Nomenclature} We consider transmission over a BIAWGNC
with  noise distribution given by $\mathcal{N}(0,\sigma^2)$. We use blocklength
$n$ LDPC encoding and denote $\x=\{x_1,x_2,\dots,x_n\}$ as the transmitted
codeword. The input codebit takes value in $\{0,1\}$. The received message is
denoted by $\y\in \mathbb{R}^n$.  We will use the loglikelihood ratio (LLR) to
represent the channel observations. More precisely, we have $l_i =
\log\frac{p_{y\vert x}(y_i \mid 0)}{p_{y\vert x}(y_i \mid 1)}$, where
$p_{y\vert x}(y\vert x)$ is the channel transition pdf. Let $\llr$ represent
the vector of LLRs.  

The LDPC code is represented by the usual Tanner graph representation\cite{RiU08}. Throughout
the paper we will use $(d_l, d_r)$-regular LDPC code ensembles to demonstrate our
 approach.  The design rate of the LDPC code is given by $1 -d_l/d_r$. In the
experiments we perform later, we consider the random $(3, 4)$-regular LDPC code
ensemble and the fixed $155$-Tanner code \cite{TSF01} which has degree 3
variable nodes and degree 5 check nodes.  We use $V$ to denote the set of $n$
variable nodes or codebits and $C$ to denote the set of $m$ parity check nodes.
A generic variable node and a check node is denoted by the letter $i$ and $c$
respectively.  Let $\mathcal{C}$ represent the code (or the set of codewords).  

\subsection{ML Decoder} The ML decoder
 can be written as the following combinatorial optimization problem \cite{FWK05},
 $\min_{\x\in \mathcal{C}} \sum_{i=1}^n l_i x_i.$
This is also the Integer  Program (IP) representing the ML decoding.  
 
\subsection{Basic Linear Programming Decoder}
For every check node $c\in C$, let $x_c=\{x_i \mid i\in c\}$. We also use $c\setminus i$ to denote the
 set of all variable nodes contained in check node $c$ except for the variable node $i$.
The above IP can be relaxed to 
\begin{align}
&\min_{\bb}  \sum_{i=1}^n \sum_{x_i\in\{0,1\}} l_i x_i b_i(x_i) \nonumber \\
& \text{s.t.} \quad \forall i\in V:\quad \sum_{x_i\in\{0,1\}}b_{i}(x_i)=1, \label{cond1} \nonumber \\
& \forall c\in C,\, \forall i\in c,\,  x_i\in \{0,1\}:\quad
 b_i(x_i)=\sum_{x_{c\setminus i}}b_{c}(x_i, x_{c\setminus i}) \nonumber \\
 & \forall c\in C,\, \forall x_c\,\,\text{s.t.}\,\,\sum_{i\in c}x_i=1,\,\,
 b_{c}(x_c)=0,\,\, \text{(local codeword)}\nonumber \\ 
& 0\leq b_i(x_i) \leq 1, \,\, \forall i\in V, \quad 0\leq b_{c}(x_c) \leq 1, \,\, \forall c\in C, 
\nonumber
\end{align}
which constitutes the standard LP decoder \cite{FWK05}.  
Here $b_i(x_i)$ represents the ``belief'' of the variable node  $i$ and
$b_c(x_c)$ represents the ``belief'' associated to the check node $c$.  
In the sequel, we will also say that $b_i(x_i)$ is the belief associated to the 
 singleton clique $i$ and $b_c(x_c)$ is the belief associated to a higher order 
 {\em clique}\footnote{In a {\em clique}, every node is connected to every other node. The LPs given in this paper always have beliefs associated to cliques.}. 
Also, $\bb$ represents the 
 vector of all the variable node and check node beliefs. 
Note that the objective function
 represents the ``cost'' of decoding a bit to 0. This cost is reduced if the corresponding LLR is negative.
The second condition imposed by the LP above is the {\em consistency} condition. In the third condition, 
 the sum is over GF(2). 
%In general, when we enhance our LP, at 
% every step we ensure that any two cliques that we add are consistent across their intersection. 

\section{Main Results: Improved LP decoding}

As mentioned earlier, our approach is to adaptively add constraints to the LP
which decrease the duality gap. Furthermore, we want the number of such
additional constraints to be small and also each constraint to involve only a small
number of variables.

%TO EXPAND:(ALSO EXPLAIN A BIT MORE ON FeldmanWainwright05, Siegel)\\

There are many existing approaches to improve the LP decoder 
 \cite{FWK05, TagSie08, DimWain09, Burshtein10, DYW07, Chertkov07,
YFW06}.  In \cite{FWK05} an improved LP decoder based on ``lift-and-project''
method was introduced.  In \cite{DimWain09}, the LP is enhanced by eliminating
the facet containing the fractional solution. In \cite{TagSie08, Burshtein10},
extra constraints are added by combining parity checks which correspond to
violated constraints to improve the LP performance.  In \cite{DYW07} a
mixed-integer LP was introduced by fixing the most ``uncertain'' bit of the
pseudocodeword.  In \cite{Chertkov07} an adaptive LP decoder was introduced
 based on {\em loop calculus}. Critical loops were identified and then
broken by fixing bits on the loop. In \cite{YFW06} a non-linear programming decoder
was designed for decoding LDPC codes.  
 
\subsection{LP Decoders using Frustrated Subgraphs}\label{sec:LPwithFC} Although
our approach is in the same spirit as aforementioned works, the main ideas are
very different and have their origins in \cite{JJthesis08} and \cite{JMW07}.
Similar ideas have been independently used in \cite{SonJaa07, KoPaTz}. Before we
describe the basic idea let us first define the notion of a {\em frustrated
graph}.

\begin{definition}[Frustrated Graph] Consider a constraint
satisfaction problem (CSP) defined on $n$ binary (boolean) variables, $\x$, and
$m$ constraint nodes (each of which constraints a small set of variables). For each constraint $c$ there are
only certain configurations of $x_{c}\in \{0,1\}^{\vert c\vert}$ which satisfy it.  
Then, we say that the graph is {\em frustrated} if and only if there
is no assignment of $\x$ which satisfies all $m$ constraint nodes
simultaneously. \qed 
\end{definition}

Let us now define a CSP for our set-up. 

\begin{definition}[CSP obtained by the LP Solution] Assume that the output
of LP, \bb, is a fractional solution, i.e., we have a duality gap.  
 For every clique $c$ (with size at least two), the set of $x_c$ which {\em satisfy} the clique are those for which
 $b_c(x_c)>0$. In other words, the set of $x_c$ satisfying the clique $c$, correspond to the
 support set of $b_c(x_c)$. Consequently, the CSP is given by the $n$ (binary) variables, $\{x_i\}_{i=1}^n$
 and the set of cliques $c$ (constraining the variables as described previously).
  \qed 
%Consider all the
%potential functions (cliques which have at least two variables) and their LP
%beliefs. More precisely, consider one of the check nodes in the LDPC code, say
%$c$, and consider the beliefs $b_c(x_c)$ which are given by the output of the
%LP. We say that a configuration of $x_c$ satisfies $c$, if it has a {\em
%non-zero belief}, i.e., $b_c(x_c)>0$. If the corresponding belief is zero, then
%we say that the configuration $x_c$ does not satisfy $c$. In other words, the
%set of configurations which satisfy the potential function correspond to the
%support set of the belief. \qed 
\end{definition}

We now show that if the output of the LP has a frustrated subgraph, then it
must have a duality gap, i.e., the solution must be fractional. 

\begin{lemma}
If there exists a frustrated subgraph, then there
 is a duality gap. 
\end{lemma}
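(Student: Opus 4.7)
The plan is to argue the contrapositive: if the LP has no duality gap, then the CSP induced by the support of \bb\ is globally satisfiable, so in particular no frustrated subgraph can exist.

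First, I would unpack what ``no duality gap'' means in the notation of the excerpt. By the preceding discussion, no duality gap corresponds to \bb\ being integral. Combined with the normalization $\sum_{x_i} b_i(x_i) = 1$ and the box constraints $0 \le b_i(x_i) \le 1$, this forces, for each variable node $i$, a unique value $x_i^{*} \in \{0,1\}$ with $b_i(x_i^{*}) = 1$ and $b_i(1-x_i^{*}) = 0$. Similarly, for each check clique $c$, integrality together with $0 \le b_c(x_c) \le 1$ and (via marginalization) $\sum_{x_c} b_c(x_c) = 1$ pins down a unique configuration $x_c^{*} \in \{0,1\}^{|c|}$ with $b_c(x_c^{*}) = 1$ and $b_c(x_c) = 0$ elsewhere.

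The next step is to show that the collection $\{x_i^{*}\}_{i\in V}$ and $\{x_c^{*}\}_{c\in C}$ is globally consistent, i.e.\ that $(x_c^{*})_i = x_i^{*}$ for every $i \in c$. This is exactly the content of the LP consistency equation
\[
 b_i(x_i) \;=\; \sum_{x_{c\setminus i}} b_c(x_i, x_{c\setminus i}),
\]
evaluated at $x_i = x_i^{*}$: the left side equals $1$, so the unique nonzero term on the right must be $b_c(x_c^{*})$ with $(x_c^{*})_i = x_i^{*}$. Hence the integral LP solution defines a globally consistent assignment $\x^{*} \in \{0,1\}^n$.

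Finally, I would observe that by definition the CSP induced by \bb\ declares a configuration $x_c$ to satisfy clique $c$ whenever $b_c(x_c) > 0$; since $b_c(x_c^{*}) = 1 > 0$ for every $c$, the assignment $\x^{*}$ simultaneously satisfies every clique in the CSP. Restricting $\x^{*}$ to any subgraph still satisfies all of the subgraph's constraints, so no subgraph can be frustrated. This contradicts the hypothesis, completing the contrapositive. The only step that requires care is the consistency argument in the middle paragraph; the rest is a direct read-off of the LP constraints, so I do not anticipate a genuine obstacle.
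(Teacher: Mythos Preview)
Your proof is correct and follows essentially the same contrapositive argument as the paper: assume integrality, extract the unique supported configuration on each clique, and use the LP marginalization constraints to conclude global consistency, whence no subgraph is frustrated. The only difference is that you spell out the consistency step explicitly via the equation $b_i(x_i)=\sum_{x_{c\setminus i}} b_c(x_i,x_{c\setminus i})$, whereas the paper simply asserts that consistency follows from the LP constraints.
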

\begin{proof}
Indeed, suppose on the contrary there was no duality gap, i.e, output of the LP
is integral. Thus for every clique $c$ (singleton or higher order),
$b_c(x_c)=1$ for some $x_c\in \{0,1\}^{\vert c\vert}$ and $b_c(x_c)=0$ for the rest.
Consider any subset of the cliques, $\mathcal{C}=\{c_1, c_2,\dots, c_r\}$.
Let $x^*_{c_i}$ be such that $b_{c_i}(x^*_{c_i})=1.0$. We
claim that $\cup_{i=1}^r x^*_{c_i}$ satisfies the CSP represented by
$\mathcal{C}$.  Indeed, this follows from the consistency imposed by the LP. 
Thus no subgraph is frustrated.
\end{proof}

Thus our strategy is as follows: first identify a frustrated subgraph from the
output of the basic LP; if we add this frustrated subgraph as a constraint in
our LP, then we ensure that this subgraph cannot be frustrated.  In our
experiments we see that, in many cases, adding the frustrated subgraphs
eliminates the duality gap. 

To ensure that the subgraph we add as a constraint to the LP becomes consistent
(or is not frustrated), we need to add all its maximal cliques and their
intersections to the LP. More precisely, we add the maximal cliques of the
junction tree\footnote{ See \cite{JJthesis08} for a discussion on Junction
trees. It can be shown that running LP on the junction tree of a graph is
optimal (equal to the original combinatorial optimization problem).  If frustrated
 subgraph is a cycle then we just add all the triangles which chordalizes the cycle.}
of that subgraph as extra beliefs to the LP. 

The main challenge that remains is to find a frustrated subgraph in tractable
time. In general, it is hard to find an arbitrary subgraph which is frustrated.
We also remark that in \cite{JJthesis08} it was found empirically that the
random field ising model could typically be solved (duality gap eliminated) by
adding frustrated cycles arising in the LP solution.  It is also known from
Barahona's work (see references within \cite{SonJaa07}) that adding cycles is
sufficient to solve the zero-field planar ising model.  Hence as a first step,
we focus on finding {\em frustrated cycles} of the graph. Frustrated cycles and
a procedure to find them are described in the next section. The procedure is
tractable and uses the implication graph method (to solve 2SAT problem) of
\cite{APT79, JJthesis08}.  For details see
Appendix B in \cite{JJthesis08}. 

\subsection{Implication Graph and Frustrated Cycles}\label{sec:implicationgraph}
For every clique $c$, consider all the two-projections of its belief. I.e., for every
$b_{c}(x_c)$, consider all the $b_{ij}(x_i, x_j) \,\,\forall\,\,i,j \in c$.
These are obtained by summing out the other variables. We construct the
implication graph as follows.  In the implication graph each node $i$ is present
as $i_+$ (for $x_i=0$) and $i_-$ (for $x_i=1$). Thus, the implication graph has
a total of $2n$ nodes.  There is a directed edge present between $i$ and $j$
which represents the logical implication obtained from $b_{ij}(x_i,x_j)$. Let us
explain this in more details.  To generate the logical implication, consider the
set $T$ of configurations of $(x_i, x_j)$ which render $b_{ij}(x_i,x_j)>0$ and
can introduce inconsistency.  Thus, $T$ is any of the following $(01,10),
(01,10,11), (01,10,00), (00,11), (00,11,10)$ and $(00,11,01)$. Indeed, moments
thought shows that other configurations, e.g., $(00,01,10,11)$,  are not
restrictive and hence do not  form any logical implication. Also, nodes which
have integer beliefs are present as isolated nodes in the graph and do not have
any edges entering or leaving it. 
Draw the directed edges using this $T$. E.g.,  suppose that LP outputs
beliefs such that $b_{ij}(0,1)>0, b_{ij}(1,0)>0, b_{ij}(1,1)>0, b_{i,j}(0,0)=0$
then $T=(01,10,11)$. This implies a directed edge from $i_+\to j_-$ and $j_+
\to i_-$, because if $x_i=0$ then we must have $x_j=1$ and if $x_j=0$ then
$x_i=1$.  
%Here we show the basic elements that are used in creating the implication 
% graph. As mentioned before, every pairwise belief (which is obtained by
% marginalizing higher order beliefs) can potentially impose a logical
% implication between its constituent variables. 
In figure~\ref{fig:implications}
 we illustrate all possible implications which form the building blocks for
 constructing the implication graph.
\begin{figure}[hbt] \centering
\setlength{\unitlength}{1.2bp}% 
\begin{picture}(160,140)
\put(60,0){\includegraphics[scale=1.8]{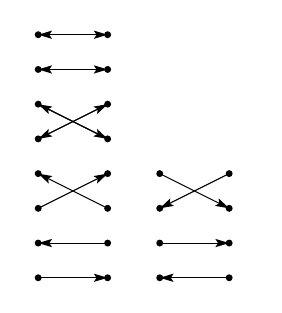}}
\put(10,130){\makebox(0,0){$x_i$}}
\put(22,130){\makebox(0,0){$x_j$}}
\put(10,118){\makebox(0,0){$0$}}
\put(20,118){\makebox(0,0){$0$}}
\put(10,107){\makebox(0,0){$1$}}
\put(20,107){\makebox(0,0){$1$}}
\put(70,120){\makebox(0,0){${\footnotesize i_+}$}}
\put(70,105){\makebox(0,0){${\footnotesize i_-}$}}
\put(115,120){\makebox(0,0){${\footnotesize j_+}$}}
\put(115,105){\makebox(0,0){${\footnotesize j_-}$}}
\put(70,90){\makebox(0,0){${\footnotesize i_+}$}}
\put(70,75){\makebox(0,0){${\footnotesize i_-}$}}
\put(115,90){\makebox(0,0){${\footnotesize j_+}$}}
\put(115,75){\makebox(0,0){${\footnotesize j_-}$}}
\put(70,60){\makebox(0,0){${\footnotesize i_+}$}}
\put(70,45){\makebox(0,0){${\footnotesize i_-}$}}
\put(115,60){\makebox(0,0){${\footnotesize j_+}$}}
\put(115,45){\makebox(0,0){${\footnotesize j_-}$}}
\put(70,30){\makebox(0,0){${\footnotesize i_+}$}}
\put(70,15){\makebox(0,0){${\footnotesize i_-}$}}
\put(115,30){\makebox(0,0){${\footnotesize j_+}$}}
\put(115,15){\makebox(0,0){${\footnotesize j_-}$}}

\put(10,89){\makebox(0,0){$0$}}
\put(20,89){\makebox(0,0){$1$}}
\put(10,77){\makebox(0,0){$1$}}
\put(20,77){\makebox(0,0){$0$}}

\put(5,60){\makebox(0,0){$0$}}
\put(15,60){\makebox(0,0){$0$}}
\put(5,52){\makebox(0,0){$1$}}
\put(15,52){\makebox(0,0){$0$}}
\put(5,44){\makebox(0,0){$0$}}
\put(15,44){\makebox(0,0){$1$}}
\put(30,60){\makebox(0,0){$1$}}
\put(40,60){\makebox(0,0){$1$}}
\put(30,52){\makebox(0,0){$1$}}
\put(40,52){\makebox(0,0){$0$}}
\put(30,44){\makebox(0,0){$0$}}
\put(40,44){\makebox(0,0){$1$}}
\put(27,52){\makebox(0,0){$\Bigg($ }}
\put(47,52){\makebox(0,0){$\Bigg)$ }}
\put(125,52){\makebox(0,0){$\bigg($ }}
\put(167,52){\makebox(0,0){$\bigg)$ }}

\put(5,30){\makebox(0,0){$0$}}
\put(15,30){\makebox(0,0){$0$}}
\put(5,22){\makebox(0,0){$1$}}
\put(15,22){\makebox(0,0){$1$}}
\put(5,14){\makebox(0,0){$0$}}
\put(15,14){\makebox(0,0){$1$}}
\put(30,30){\makebox(0,0){$0$}}
\put(40,30){\makebox(0,0){$0$}}
\put(30,22){\makebox(0,0){$1$}}
\put(40,22){\makebox(0,0){$1$}}
\put(30,14){\makebox(0,0){$1$}}
\put(40,14){\makebox(0,0){$0$}}
\put(27,22){\makebox(0,0){$\Bigg($ }}
\put(47,22){\makebox(0,0){$\Bigg)$ }}
\put(125,22){\makebox(0,0){$\bigg($ }}
\put(167,22){\makebox(0,0){$\bigg)$ }}

\end{picture}
\caption{Figure shows all possible implications between $x_i$ and $x_j$. These are used as basic 
 building blocks to create the implication graph. 
\label{fig:implications}}
\end{figure} 

Finally, a frustrated cycle is defined to be a {\em directed cycle or a directed path}
which visits both $i_+$ and $i_-$, once, for any $i$. One can find all such cycles
and paths in a time which is linear in the number of nodes of the implication
graph.

Figure~\ref{fig:frustratedcycles} shows the possible frustrated cycles which
are obtained from the implication graph.
 The figure on the left shows true
frustration. I.e., from the logical implications, obtained by the LP solution, we have that
$x_i=0$ implies $x_i=1$ and $x_i=1$ implies $x_i=0$. This means that the set of
local beliefs (which lie on the cycle connecting $i_+$ to $i_-$) are not
consistent. Hence it naturally suggests that there is frustration in the LP
solution. The other kind of frustration, suggested by the remaining figures, is
called as {\em quasi-frustration}. The figure in the middle demonstrates that
$x_i=1$ implies that $x_i=0$ but not the other way around. This
quasi-frustration implies that there cannot be a global joint distribution (on
all the variable nodes) such that it is consistent with the local beliefs.
Indeed, if it were true, then we know that it must assign $b_i(x_i=0) > 0$ and
$b_i(x_i=1) > 0$. This is because the variable node $i$ is present in the
implication graph and hence must have a fractional solution for $b_i(x_i)$.
However, from the implication graph $x_i=1$ implies $x_i=0$, hence any
configuration (on all nodes), which has a non-zero probability, cannot have $x_i =
1$, i.e., $b_i(x_1=1)=0$, a contradiction. 
\begin{figure}[hbt] \centering
\setlength{\unitlength}{1.2bp}% 
\begin{picture}(160,53)
\put(0,0){\includegraphics[scale=1.2]{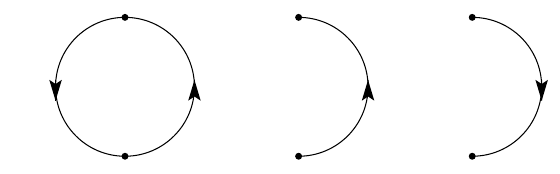}}
\put(35,50){\makebox(0,0){\footnotesize $i_+$}}
\put(35,2){\makebox(0,0){\footnotesize $i_-$}}
\put(85,50){\makebox(0,0){\footnotesize $i_+$}}
\put(85,2){\makebox(0,0){\footnotesize $i_-$}}
\put(135,50){\makebox(0,0){\footnotesize $i_+$}}
\put(135,2){\makebox(0,0){\footnotesize $i_-$}}
\end{picture}
\caption{ Figure shows the possible frustrated cycles present in the implication graph. 
 The first cycle is truly frustrated, since we must have $x_i=0$ implies $x_i=1$
 and vice-versa.  The remaining two cycles are quasi-frustrated, since either $x_i=0$ implies $x_i=1$ or
 vice-versa, but not both at the same time. These cycles are added to the LP and the enhanced decoder is termed LP-Frustrated Cycles (LP-FC). 
\label{fig:frustratedcycles}}
\end{figure} 
We remark here that once we have found a frustrated cycle on the implication
graph, one can easily obtain the cycle on the original graph, by just projecting
the nodes on the implication graph back to the nodes on the original graph. 
The method in which we add the frustrated cycle to the LP is illustrated
 in the example below. 

\begin{example}[Triangulation of Frustrated Cycles]\label{ex:triangulation}
Figure~\ref{fig:triangulation} shows a cycle $(x_1, x_2, x_3, x_4, x_5, x_6,
x_7, x_8)$ which we add to the LP as a constraint. Adding the entire belief,
$b(x_1, x_2,\dots,x_8)$, as a constraint,  would be expensive and result in
$2^8$ extra variables and constraints amongst them.  Instead we add the maximal
cliques of its junction tree.  To do this, we first chordalize or triangulate
the cycle, as shown in the figure~\ref{fig:triangulation}, into the 6 triangles given by $(x_1, x_2,
x_3), (x_1, x_3, x_4), (x_1, x_4, x_5), (x_1, x_5, x_6)$, $(x_1, x_6, x_7)$,
$(x_1, x_7, x_8)$. These triangles are the maximal cliques and we add them as
constraints to the LP. E.g., we add $b_{x_1x_2x_3}(x_1,x_2,x_3)$ for all
$x_1,x_2,x_3\in \{0,1\}$. For every belief that we add to the LP, we add
constraints to ensure consistency with previously added beliefs. E.g., when we
add $b_{x_1,x_2,x_3}(x_1,x_2,x_3)$ and $b_{x_1,x_3,x_4}(x_1,x_3,x_4)$ we
introduce the constraint $\sum_{x_2} b_{x_1,x_2,x_3}(x_1, x_3) =
\sum_{x_4}b_{x_1,x_3,x_4}(x_1,x_3,x_4)$ for all values of $x_1, x_3$. In other words, every clique that we
add to the LP, must be consistent across its intersections.  
\end{example} 

\begin{figure}[hbt] \centering
\setlength{\unitlength}{1.2bp}% 
\begin{picture}(160,60)
\put(40,0){\includegraphics[scale=0.8]{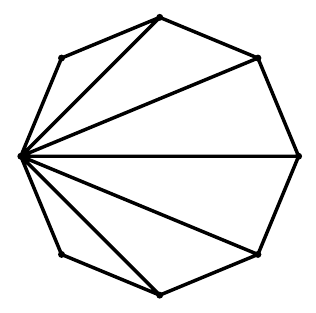}}
\put(40,28){\makebox(0,0){$x_1$}}
\put(46,49){\makebox(0,0){$x_2$}}
\put(68,62){\makebox(0,0){$x_3$}}
\put(95,49){\makebox(0,0){$x_4$}}
\put(102,28){\makebox(0,0){$x_5$}}
\put(95,8){\makebox(0,0){$x_6$}}
\put(75,0){\makebox(0,0){$x_7$}}
\put(46,8){\makebox(0,0){$x_8$}}
\end{picture}
\caption{Figure shows triangulation of the cycle $(x_1,x_2,x_3,x_4,x_5,x_6,x_7,x_8,x_1)$. The triangles chordalize the cycle and 
 form the maximal cliques. The details are explained in example~\ref{ex:triangulation}. 
\label{fig:triangulation}}
\end{figure}

\subsection{Experiments using Frustrated Cycles}\label{sec:LPwithcycles}

We consider BIAWGNC where the standard deviation of the noise is denoted by $\sigma$. We consider
two types of LDPC encoding: (i) regular $(3,4)$ LDPC ensemble with design rate
equal to $1/4$ and the (ii) 155-Tanner code \cite{TSF01}. The 155-Tanner code
 has a design rate of $2/5$. We let the standard deviation of the
noise, $\sigma$, take values in the set $\{0.5, 0.6, 0.7, 0.8, 0.9, 0.95, 1.00,
1.05, 1.10, 1.15, 1.20\}$. We run 2000 trials for each value of $\sigma$. We run
experiments for both the $(3,4)$-regular ensemble and the 155-Tanner code. For
the $(3,4)$-regular ensemble, in each trial a code is generated uniformly at random and used
for transmission.

Since we are transmitting over a symmetric channel, for the purpose of performance
analysis we can assume that we are transmitting the all-zero codeword
\cite{RiU08}. Under this assumption, the distribution of the LLRs are given by
$\mathcal{N}(\frac2{\sigma^2}, \frac4{\sigma^2})$. The generated LLRs are 
fed to both basic LP and LP-FC decoder. The LP-FC algorithm is described below.  For
any decoder, if the output equals the all-zero codeword, we declare success,
else there is an error. We plot the word error rate (WER) versus the SNR
($E_b/N_0$) in dB.

\begin{center}\framebox[0.95\columnwidth]{
\begin{minipage}{0.90\columnwidth}
\vspace{2mm}
\underline{LP-FC Decoder:}
\vspace{2mm}
\begin{enumerate}
\item  Run the basic LP. Go to step 4.
\item  If the output is fractional, find the frustrated cycle (FC) of the
smallest length and add all its triangles. 
\item  Rerun the LP.                                                    
\item If output is integral, stop else go to 2.                    
\end{enumerate}
\vspace{1mm}
\end{minipage}}
\end{center}
\vspace{3mm}

\subsubsection{Experiments with $(3,4)$-regular LDPC ensemble}

Figure~\ref{fig:3_4_160} shows the performance curve when we use the
$(3,4)$-regular ensemble with blocklength 160.  The dark curve represents the
performance (averaged over 2000 trials where in each trial a code and noise
realization is picked uniformly at random) when we use the basic LP decoder. The gray curve
denotes the performance under LP-FC. We remark here that for each simulation
trial, the LP and LP-FC were run on the same code and noise realization. We
observe that there were many trials where the basic LP decoder failed. However,
adding a small number of cycles to the LP helped in retrieving the transmitted
all-zero codeword. From the figures we observe that LP-FC performs much better
than the basic LP.

%\begin{figure}[htp] \centering
%\input{ps/3_4_80} 
%\caption{\label{fig:3_4_80} The figure shows the performance improvement of LP-FC over the basic LP. 
% In this experiment $(3,4)$-regular LDPC ensemble of blocklength 80 was used. 
% The dark curve is 
% the word-error-rate (WER) performance of the basic LP and the gray curve shows the performance of the LP-FC. 
%} 
%\end{figure}

\begin{figure}[htp] \centering
\input{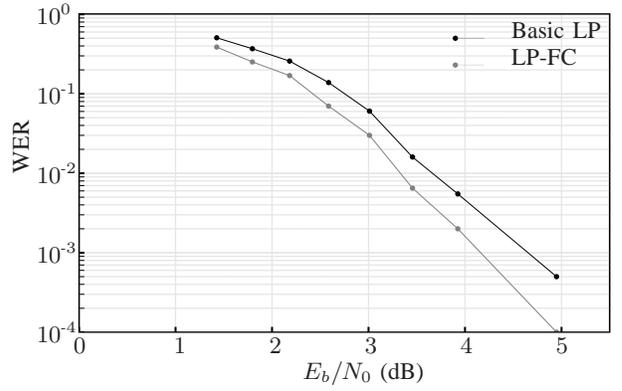} 
\caption{\label{fig:3_4_160} The figure shows the performance improvement of LP-FC over the basic LP. 
 In this experiment $(3,4)$-regular LDPC ensemble of blocklength 160 was used. The dark curve depicts 
 the word error rate (WER) performance of the basic LP and the gray curve shows the performance of the LP-FC. 
} 
\end{figure}

Table~\ref{tab:complexity} demonstrates various quantities for different values
of the SNR for the case when we use the $(3,4)$-regular LDPC ensemble with
blocklength 160. The second column shows the average number of LPs called in
the LP-FC algorithm, i.e, the number of times step 3 is called in the LP-FC
algorithm.  The remaining columns illustrate the complexity of the extra LPs
which are solved in the LP-FC algorithm.  The third and the sixth column show
the number of non-zeros in the constraint matrix and the dimensions of the
constraint matrix when the basic LP is run. The fourth and the last column show
the average number of non-zero entries in the constraint matrix and the average
dimensions of the of the constraint matrix, when the LP-FC algorithm is run,
respectively. Also shown in the fifth column is the maximum number of non-zero
entries in any constraint matrix which occurs in the LP-FC algorithm.  Thus, the
table demonstrates that the size of the LP, after adding the frustrated cycles,
does not increase by much. Hence the LP-FC decoder is kept tractable.

We also observe that every cycle we add is a {\em simple} cycle, without any
self-intersections. 
 
\begin{table} \centering 
\begin{tabular}{|p{0.6cm}|p{0.5cm}|p{0.7cm}|p{0.7cm}|p{0.8cm}|p{1.3cm}|p{1.3cm}|}
\hline
SNR (in dB) &  Num. of LPs (avg.) & Non-zeros (avg.) LP ($\times 10^4$) & Non-zeros (avg.) LP-FC ($\times 10^4$)& Non-zeros (max) LP-FC  ($\times 10^4$) & Dim. for LP (rows,cols) & Dim. (avg.) for LP-FC  (rows,cols) \\
\hline
 3.93 &  3 & 2.6760  & 2.7028  &  2.9252  &(6490,4920) &(6569,4970)  \\
 3.46 &  6 & 2.6597  & 2.7190  &  3.1780  &(6458,4896) &(6635,5009)  \\
 3.01 &  7 & 2.6694  & 2.7471  &  3.8020  &(6477,4910) &(6708,5059)  \\
 2.59 &  6 & 2.6613  & 2.7312  &  3.6116  &(6461,4899) &(6669,5032)  \\
 2.18 &  7 & 2.6637  & 2.7514  &  4.1012  &(6466,4902) &(6727,5070)  \\
 1.79 &  7 & 2.6659  & 2.7403  &  3.7796  &(6470,4905) &(6692,5047)  \\
 1.43 &  8 & 2.6572  & 2.7483  &  3.8984  &(6453,4893) &(6725,5068)   \\
\hline 
\end{tabular} 
\caption{ \label{tab:complexity} Complexity comparison of  LP and LP-FC decoders. }
\vspace{-0.5cm}
\end{table}

\subsubsection{Experiments with $155$-Tanner code \cite{TSF01}}
We also perform experiments with the 155-Tanner code which has 155 variable
nodes and 93 check nodes. The
experimental set-up is same as before. 

Figure~\ref{fig:155} shows the performance curve (averaged over 2000 noise
realizations for each value of $\sigma$) when we use the 155-Tanner code.  Again, we
observe that LP-FC performs much better than the basic LP. 

\begin{figure}[htp] \centering
\input{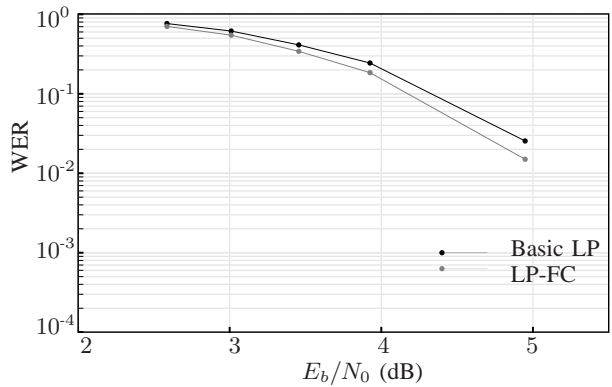} 
\caption{\label{fig:155} The figure shows the performance improvement of LP-FC over the basic LP
 when the 155-Tanner code was used.} 
\end{figure}

We also perform experiments at very high SNR for the 155-Tanner code. This
known as the error-floor regime. The error-floor occurs because of low-weight
pseudocodewords which are fractional, i.e., not codewords. In \cite{CS08} a
pseudocodeword search algorithm was used to generate  pseudocodewords which are
not codewords.  We pick 200 worst pseudocodewords which have effective weight
\cite{KoV03b} less than the minimum Hamming distance of 20. Also, all these
pseudocodewords will dominate the WER when SNR becomes very large.  

The experiment we perform is as follows. We take the corresponding noise
realizations which gave rise to these 200 pseudocodewords. We run the basic LP
on then and confirm that it fails on all these noise realizations and indeed we
recover the fractional pseudocodewords. On the same noise realizations, we also
 run the LP-FC.  Remarkably, the LP-FC is able to recover the correct
(all-zero) codeword for all the 200 worst-case noise realizations. Furthermore,
the step 3 in the LP-FC algorithm was just called once. The constraint matrix
for the basic LP is has 51,646 non-zeros entries and a dimension of (8618,
7006). On the other hand the enhanced LP has, on an average, 52,676 non-zeros
entries and an average dimension of (8925, 7163). Again, the LP-FC is kept
tractable.

\section{Discussion} In this work we present an improved LP decoder, called
LP-FC, based on frustrated cycles.  We show that the presence of frustration in
the output of the basic LP solution is the cause of inconsistency. We add these
frustrated cycles as constraints to the LP, thus enhancing it. We observe
empirically that the LP-FC decoder eliminates the duality gap, in a large
number of cases. Our simulations demonstrate that the LP-FC has a much better
performance compared to the basic LP introduced in \cite{FWK05}. 

This approach toward enhancing the basic LP decoder opens up many interesting
research directions. One direction is to investigate if one can add a {\em
frustrated subgraph}, which is not a cycle, to enhance the LP, when the
addition of cycles is not enough to eliminate the duality gap. The
reason we choose to add frustrated cycles, is that as mentioned in
Section~\ref{sec:implicationgraph}, the algorithm for finding such cycles is
simple. It is not clear if there exists simple algorithms to find minimal 
frustrated subgraphs. 
%Furthermore, perhaps one would like
%to add the frustrated subgraph which is not only of low treewidth, but also
%maximally reduces the duality gap. In this case, it is interesting to know what
%is the ``measure'' of frustration. 

%We also like to remark that the during each simulation, the
%cycle that was obtained was always a {\em simple} cycle without any
%self-intersection. Further, we also add the smallest frustrated cycle in order
%to keep the extra number of constraints, added to the LP, to be small. With
%respect to the comments made just above, it would be interesting to investigate
%if one could consider subgraphs which are slightly more complicated than a
%simple cycle, e.g., a cycle with one self-intersection (dumbbell), which are
%still amenable to a tractable algorithm for finding them. 
%Another direction would be to investigate the performance versus complexity
%trade-off of the LP-FC when compared to the tighter LP relaxations introduced
%by \cite{FWK05} based on lift-and-project methods. 

Recently, improved LP detectors based on frustrated cycles was also used in
\cite{KJC11} for 2DISI channel. One future research direction is to
investigate other combinatorial problems in graphical coding, e.g., minimum
pseudocodeword weight problem, minimum Hamming distance etc. 

Another future direction would be to develop distributed, i.e.,
message-passing, versions for the LP-FC. 

\section{Acknowledgments} Our work at LANL was carried out under the auspices
of the National Nuclear Security Administration of the U.S. Department of
Energy at Los Alamos National Laboratory under Contract No.  DE-AC52-06NA25396.
SK acknowledges support of NMC via the NSF collaborative grant CCF-0829945 on
``Harnessing Statistical Physics for Computing and Communications.'' 

\bibliographystyle{IEEEtran} 
\bibliography{lanl}
\end{document}